\documentclass[10pt,conference]{IEEEtran}

\usepackage{amsmath,amssymb,epsfig,graphicx,subfigure}

\newtheorem{theorem}{Theorem}[section]

\newtheorem{lemma}[theorem]{Lemma}
\newtheorem{proposition}[theorem]{Proposition}

\newtheorem{definition}[theorem]{Definition}
\newtheorem{remark}[theorem]{Remark}

\newcommand\nc\newcommand
\nc\bfa{{\mathbf a}}
\nc\bfA{{\mathbf A}}\nc\cA{{\mathcal A}}
\nc\bfb{{\mathbf b}}\nc\bfB{{\mathbf B}}\nc\cB{{\mathcal B}}
\nc\bfc{{\mathbf c}}\nc\bfC{{\mathbf C}}\nc\cC{{\mathcal C}}
\nc\bfd{{\mathbf d}}\nc\bfD{{\mathbf D}}\nc\cD{{\mathcal D}}
\nc\bfe{{\mathbf e}}\nc\bfE{{\mathbf E}}\nc\cE{{\mathcal E}}
\nc\bff{{\mathbf f}}\nc\bfF{{\mathbf F}}\nc\cF{{\mathcal F}}
\nc\bfg{{\mathbf g}}\nc\bfG{{\mathbf G}}\nc\cG{{\mathcal G}}
\nc\bfh{{\mathbf h}}\nc\bfH{{\mathbf H}}\nc\cH{{\mathcal H}}
\nc\bfi{{\mathbf i}}\nc\bfI{{\mathbf I}}\nc\cI{{\mathcal I}}
\nc\bfj{{\mathbf j}}\nc\bfJ{{\mathbf J}}\nc\cJ{{\mathcal J}}
\nc\bfk{{\mathbf k}}\nc\bfK{{\mathbf K}}\nc\cK{{\mathcal K}}
\nc\bfl{{\mathbf l}}\nc\bfL{{\mathbf L}}\nc\cL{{\mathcal L}}
\nc\bfm{{\mathbf m}}\nc\bfM{{\mathbf M}}\nc\cM{{\mathcal M}}
\nc\bfn{{\mathbf n}}\nc\bfN{{\mathbf N}}\nc\cN{{\mathcal N}}
\nc\bfo{{\mathbf o}}\nc\bfO{{\mathbf O}}\nc\cO{{\mathcal O}}
\nc\bfp{{\mathbf p}}\nc\bfP{{\mathbf P}}\nc\cP{{\mathcal P}}
\nc\bfq{{\mathbf q}}\nc\bfQ{{\mathbf Q}}\nc\cQ{{\mathcal Q}}
\nc\bfr{{\mathbf r}}\nc\bfR{{\mathbf R}}\nc\cR{{\mathcal R}}
\nc\bfs{{\mathbf s}}\nc\bfS{{\mathbf S}}\nc\cS{{\mathcal S}}
\nc\bft{{\mathbf t}}\nc\bfT{{\mathbf T}}\nc\cT{{\mathcal T}}
\nc\bfu{{\mathbf u}}\nc\bfU{{\mathbf U}}\nc\cU{{\mathcal U}}
\nc\bfv{{\mathbf v}}\nc\bfV{{\mathbf V}}\nc\cV{{\mathcal V}}
\nc\bfw{{\mathbf w}}\nc\bfW{{\mathbf W}}\nc\cW{{\mathcal W}}
\nc\bfx{{\mathbf x}}\nc\bfX{{\mathbf X}}\nc\cX{{\mathcal X}}
\nc\bfy{{\mathbf y}}\nc\bfY{{\mathbf Y}}\nc\cY{{\mathcal Y}}
\nc\bfz{{\mathbf z}}\nc\bfZ{{\mathbf Z}}\nc\cZ{{\mathcal Z}}
\newcommand{\remove}[1]{}

\remove{
\nc\bsa{{\boldsymbol a}}\nc\bsA{{\boldsymbol A}}
\nc\bsb{{\boldsymbol b}}\nc\bsB{{\boldsymbol B}}
\nc\bsc{{\boldsymbol c}}\nc\bsC{{\boldsymbol C}}
\nc\bsd{{\boldsymbol d}}\nc\bsD{{\boldsymbol D}}
\nc\bse{{\boldsymbol e}}\nc\bsE{{\boldsymbol E}}
\nc\bsf{{\boldsymbol f}}\nc\bsF{{\boldsymbol F}}
\nc\bsg{{\boldsymbol g}}\nc\bsG{{\boldsymbol G}}
\nc\bsh{{\boldsymbol h}}\nc\bsH{{\boldsymbol H}}
\nc\bsi{{\boldsymbol i}}\nc\bsI{{\boldsymbol I}}
\nc\bsj{{\boldsymbol j}}\nc\bsJ{{\boldsymbol J}}
\nc\bsk{{\boldsymbol k}}\nc\bsK{{\boldsymbol K}}
\nc\bsl{{\boldsymbol l}}\nc\bsL{{\boldsymbol L}}
\nc\bsm{{\boldsymbol m}}\nc\bsM{{\boldsymbol M}}
\nc\bsn{{\boldsymbol n}}\nc\bsN{{\boldsymbol N}}
\nc\bso{{\boldsymbol o}}\nc\bsO{{\boldsymbol O}}
\nc\bsp{{\boldsymbol p}}\nc\bsP{{\boldsymbol P}}
\nc\bsq{{\boldsymbol q}}\nc\bsQ{{\boldsymbol Q}}
\nc\bsr{{\boldsymbol r}}\nc\bsR{{\boldsymbol R}}
\nc\bss{{\boldsymbol s}}\nc\bsS{{\boldsymbol S}}
\nc\bst{{\boldsymbol t}}\nc\bsT{{\boldsymbol T}}
\nc\bsu{{\boldsymbol u}}\nc\bsU{{\boldsymbol U}}
\nc\bsv{{\boldsymbol v}}\nc\bsV{{\boldsymbol V}}
\nc\bsw{{\boldsymbol w}}\nc\bsW{{\boldsymbol W}}
\nc\bsx{{\boldsymbol x}}\nc\bsX{{\boldsymbol X}}
\nc\bsy{{\boldsymbol y}}\nc\bsY{{\boldsymbol Y}}
\nc\bsz{{\boldsymbol z}}\nc\bsZ{{\boldsymbol Z}}
}

\nc\bsa{{\mathbf a}}\nc\bsA{{\mathbf A}}
\nc\bsb{{\mathbf b}}\nc\bsB{{\mathbf B}}
\nc\bsc{{\mathbf c}}\nc\bsC{{\mathbf C}}
\nc\bsd{{\mathbf d}}\nc\bsD{{\mathbf D}}
\nc\bse{{\mathbf e}}\nc\bsE{{\mathbf E}}
\nc\bsf{{\mathbf f}}\nc\bsF{{\mathbf F}}
\nc\bsg{{\mathbf g}}\nc\bsG{{\mathbf G}}
\nc\bsh{{\mathbf h}}\nc\bsH{{\mathbf H}}
\nc\bsi{{\mathbf i}}\nc\bsI{{\mathbf I}}
\nc\bsj{{\mathbf j}}\nc\bsJ{{\mathbf J}}
\nc\bsk{{\mathbf k}}\nc\bsK{{\mathbf K}}
\nc\bsl{{\mathbf l}}\nc\bsL{{\mathbf L}}
\nc\bsm{{\mathbf m}}\nc\bsM{{\mathbf M}}
\nc\bsn{{\mathbf n}}\nc\bsN{{\mathbf N}}
\nc\bso{{\mathbf o}}\nc\bsO{{\mathbf O}}
\nc\bsp{{\mathbf p}}\nc\bsP{{\mathbf P}}
\nc\bsq{{\mathbf q}}\nc\bsQ{{\mathbf Q}}
\nc\bsr{{\mathbf r}}\nc\bsR{{\mathbf R}}
\nc\bss{{\mathbf s}}\nc\bsS{{\mathbf S}}
\nc\bst{{\mathbf t}}\nc\bsT{{\mathbf T}}
\nc\bsu{{\mathbf u}}\nc\bsU{{\mathbf U}}
\nc\bsv{{\mathbf v}}\nc\bsV{{\mathbf V}}
\nc\bsw{{\mathbf w}}\nc\bsW{{\mathbf W}}
\nc\bsx{{\mathbf x}}\nc\bsX{{\mathbf X}}
\nc\bsy{{\mathbf y}}\nc\bsY{{\mathbf Y}}
\nc\bsz{{\mathbf z}}\nc\bsZ{{\mathbf Z}}

\nc\dist{{d_H}}
\nc\wt[1]{{|#1|}}
\def\s{\qopname\relax{no}{s}}
\nc{\veps}{\varepsilon} \nc{\ds}{\displaystyle}
\nc{\ts}{\textstyle}
\nc{\argmin}[1]{\underset{#1}{\operatorname{arg \hspace{0.25em} min}}}

\newcommand{\bP}[1]{{\mathbf{P}}\left[{#1}\right]}
\newcommand{\bbE}[2]{{\mathbf{E}}_{#1} {#2}}

\nc\qed{\mbox{\rule[0pt]{0.5ex}{1.3ex}}}
\renewcommand\phi\varphi

\begin{document}

\title{Two-Level Fingerprinting Codes}


\author{\authorblockN{N. Prasanth Anthapadmanabhan\authorrefmark{1}
and Alexander Barg\authorrefmark{1}\authorrefmark{2}}
\authorblockA{\authorrefmark{1}
Dept. of ECE and Inst. for Systems Research,
University of Maryland, College Park}
\authorblockA{\authorrefmark{2}
IPPI RAS, Moscow, Russia \\
Email:\{nagarajp, abarg\}@umd.edu
\vspace*{-1em}}
}
\remove{
\author{\authorblockN{N. Prasanth Anthapadmanabhan}
\authorblockA{Dept. of Electrical and Computer Eng. \\
University of Maryland, College Park \\
Email: nagarajp@umd.edu}
\and \authorblockN{Alexander Barg}
\authorblockA{Dept. of ECE and Inst. for Systems Research \\
University of Maryland, College Park \\
and IPPI RAS, Moscow, Russia\\ Email: abarg@umd.edu}
}
}
\maketitle

\begin{abstract}
We introduce the notion of {\em two-level} fingerprinting and traceability codes.
In this setting, the users are organized in a hierarchical manner 
by classifying them into various {\em groups}; for instance, 
by dividing the distribution area into several geographic regions, and 
collecting users from the same region into one group. 
Two-level fingerprinting and traceability codes have the following property: 
As in traditional (one-level) codes, when given an illegal copy produced by
a coalition of users, the decoder identifies one of the guilty 
users if the coalition size is less than a certain threshold $t$.
Moreover, even when the coalition is of a larger size $s$ $(> t)$, the decoder
still provides partial information 
by tracing one of the groups containing a guilty user. 

We establish sufficient conditions for a code to possess the two-level
traceability property. In addition, we also provide constructions for two-level
fingerprinting codes and characterize the corresponding set of achievable rates.
\end{abstract}

\section{Introduction}
In order to protect copyrighted digital content against unauthorized distribution
or {\em piracy}, several combinatorial schemes have been proposed in the
literature (see \cite{blackburn} for a survey). In this paper, we focus on two
such techniques: {\em fingerprinting codes} \cite{boneh} and 
{\em traceability codes} \cite{chor-fiat-naor-pin}.

The owner (distributor) of the content hides a unique mark called a
{\em fingerprint} in each licensed copy bought by a user. The collection
of fingerprint assignments is referred to as a {\em code}. 
If a naive user distributes a copy of his fingerprinted content illegally, 
then the pirated copy can easily be traced back to the guilty user. 
However, if a group of users ({\em pirates}) form a {\em coalition} to detect the
fingerprints and modify/erase them to create an illegal copy, 
then tracing a guilty user becomes a non-trivial task. 

Fingerprinting and traceability codes assign fingerprints in such a way that 
given an illegal copy, the distributor can use a tracing algorithm 
to identify at least one of the pirates as long as the coalition size does 
not exceed a certain threshold $t$, which is a parameter of the problem.
However, if the coalition size exceeds this threshold, the output of the tracing
algorithm can be useless.

To overcome this weakness, we formalize the notion of {\em multi-level} fingerprinting 
codes, which are inspired by error-correcting codes with unequal error protection 
used in communications problems (cf. for instance Bassalygo et al.
\cite{bassalygo}). We focus on the simplest case of two-level fingerprinting codes 
in this paper, but the concepts introduced apply to an arbitrary number of protection 
levels.

In this setting, the users are organized in a hierarchical manner, for instance, 
according to geographical location. The distribution area is divided into several 
regions, and users from the same region are collected into one {\em group}. 
The two-level fingerprinting 
codes studied in this paper have the following property: 
As in traditional (one-level) codes, the tracing algorithm determines at least
one of the guilty users if the coalition size is at most $t$.
Moreover, even when a larger number $s$ $(> t)$ of pirates participate, the 
algorithm provides partial information by retrieving the index of a group that
contains a member of the pirate coalition.

Formal definitions 
are available in Section \ref{sect:twolev-def}.
In Section \ref{sect:twolev-ta}, we obtain sufficient conditions for two-level 
traceability codes. Finally, we provide constructions for two-level fingerprinting 
codes and analyze the achievable rates in Section \ref{sect:twolev-fing}.

\section{Problem Statement} \label{sect:twolev-def}

Consider the problem where the content is to be distributed to $M_1M_2$ users
organized in $M_1$ groups, each of which contains $M_2$ users.
Assume that there is some ordering of the groups, and of the users within each group.
Thus, any user $\bsu$ is identified by a pair of indices 
$\bsu \equiv (u_1, u_2) \in [M_1] \times [M_2]$, where 
the notation $[n]$ stands for the set $\{1,\dots,n \}$.
For a user $\bsu=(u_1,u_2)$, let $\cG(\bsu)$ be its group index,
i.e., $\cG(\bsu)=u_1$.

The distributor hides a distinctive fingerprint in each legal copy.
The fingerprints are assumed to be distributed inside the host 
message so that their location is unknown to the users. The location of the 
fingerprints
is the same for all users.

Let $n$ denote the length of the fingerprints.
Let $\cQ$ denote an alphabet of (finite) size $q,$  
usually taken to be $\{0,\dots,q-1\}$ with modulo 
$q$ addition. An $(n,M)_q$ (one-level) {\em code} $(C,D)$ is a pair of
encoding and decoding mappings
$C: [M] \to \cQ^n$, $D: \cQ^{n} \to [M] \cup\{0\}$,
where the decoder output 0 signifies a decoding failure.
For convenience, we sometimes abuse terminology by calling
the range of $C$ a code, and use the same notation $C$ for it. 

The distributor's strategy of assigning fingerprints to 
users may be either deterministic or randomized as explained in the following 
subsections. Randomization can potentially
increase the number of users that can be supported for a given 
fingerprint length at the cost of a small error probability.

{\em Notation:} Throughout we will denote random variables (r.v.'s) by capital 
letters and their realizations by lower case letters. 
The Hamming distance between vectors $\bsx,\bsy$
will be written as $\dist(\bsx,\bsy)$, while $\wt \bsx$ denotes
the Hamming weight of $\bsx$. 
If $\cX$ is a set of vectors, we abbreviate 
$\min_{\bsx \in \cX} \dist(\bsx,\bsy)$ as $\dist(\cX,\bsy)$.
We will denote the $q$-ary entropy function by 
$h(x)= -x \log_q x/(q-1) -(1-x)\log_q(1-x)$.
For two functions $f(n),g(n),$ we write 
$f(n) \doteq g(n)$ if $\lim_{n \to \infty} n^{-1} \log (f(n)/g(n))=0$. 

\subsection{Deterministic Codes}

An $(n,M_1,M_2)_q$ {\em two-level code} $(C,D_1,D_2)$ is a triple 
consisting of one encoding and two decoding mappings
\begin{equation} 
\label{eqn:twolev-code}
\begin{array}{l}
C: [M_1] \times [M_2] \to \cQ^n, \vspace{0.3em}\\
D_1: \cQ^n \to [M_1] \cup \{0\}, \vspace{0.3em} \\
D_2: \cQ^n \to ([M_1] \times [M_2]) \cup \{0\},
\end{array}
\end{equation}
with 0 signifying a decoding failure. A two-level deterministic 
assignment of fingerprints is given by the encoding mapping $C$ of such a 
two-level code. 
The {\em rate pair} of an $(n,M_1,M_2)_q$ two-level code is defined as 
$$(R_1,R_2):=\left(\frac{1}{n}\log_q M_1,\frac{1}{n}\log_q M_2 \right).$$

A \emph{coalition} of users is an arbitrary subset of $[M_1] \times [M_2]$. 
Members of the coalition are commonly referred to as {\em pirates}. 
A coalition $U$ has access to the collection of fingerprints,
namely $C(U)$, that are assigned to it. 
Let $U$ be a coalition of $t$ users and suppose
$C(U) = \{\bsx_1,\dots,\bsx_t\}$.
In order to conceal their identities from the distributor, the
coalition's members attempt to create a pirated copy with a
modified fingerprint $\bsy \in\cQ^{n}$. We assume that the code 
$(C,D_1,D_2)$ is public and can be used by the pirates in designing their 
attack. 

Note that although the fingerprint locations are not available to the pirates, 
they may detect some of these locations by comparing their copies
for differences and modify the detected positions. 
Coordinate $i$ of the fingerprints is called {\em undetectable} for the
coalition $U$ if $x_{1i}=x_{2i}=\dots=x_{ti}$ and is called 
{\em detectable} otherwise.
The set of forgeries that can be created by the coalition in this manner
is called the \emph{envelope} and is given by:
\begin{equation}  \label{eqn:twolev-env}
\cE(\bsx_1,\dots,\bsx_t)= \left\{\bsy \in \cQ^n~\big|~y_i 
\in \{x_{1i},\dots,x_{ti}\}, 
\forall i \in [n] \right\}.
\end{equation}

Given a pirated copy with a forged fingerprint, the distributor performs 
tracing based on $D_1$ and $D_2$ to locate one of the pirates.
The decoder $D_2$ attempts to trace the exact identity of one of the pirates, 
while $D_1$ focuses only on locating a group containing at least one of the pirates. 

In order to extend the notion of traceability to two-level codes, let us 
consider the case where the tracing is accomplished using
{\em minimum distance} (MD) decoding. Specifically, we take
\begin{equation}
\label{eqn:twolev-MDdec}
\begin{array}{l}
D_2(\bsy) = \argmin{\bsu \in [M_1] \times [M_2]} \dist(C(\bsu),\bsy),\\
D_1(\bsy) = \cG(D_2(\bsy)).
\end{array}
\end{equation}
If the minimum distance above is attained for multiple users, the decoder
$D_2$ outputs any one of the closest users.
This leads us to the notion of two-level traceability codes in the 
deterministic setting.

\begin{definition} 
\label{def:twolev-ta}
A two-level code $C$ has {\em $(t_1,t_2)$-traceability} property 
(or is $(t_1,t_2)$-TA) where $t_1 > t_2$ if:
\begin{itemize}
\item[(a)] For any coalition $U$ of size at most $t_2$ and any
$\bsy \in \cE(C(U))$, the decoding result $D_2(\bsy) \in U$.
\item[(b)] For any coalition $U$ of size at most $t_1$ and any
$\bsy \in \cE(C(U))$, the decoding result $D_1(\bsy) \in \cG(U)$.
\end{itemize}
\end{definition}

We observe that an $(n,M_1,M_2)_q$ two-level code which is $(t_1,t_2)$-TA 
has the $t_2$-TA property when viewed as an $(n,M_1M_2)_q$ one-level code;
moreover, for coalitions of the larger size $t_1$, one of the groups
containing a pirate is closer to the forgery compared to the remaining
groups.
In this paper, we examine sufficient conditions under which a two-level 
code has the $(t_1,t_2)$-traceability property.

\subsection{Randomized Codes}

A randomized strategy to assign fingerprints is defined as 
the following random experiment.
The distributor has a family of $(n,M_1,M_2)_q$ two-level codes 
$\{(C_k,D_{1k},D_{2k}), k \in \cK\}$, where $\cK$ is a finite set
of ``keys''. 
The distributor chooses one of the keys according to a probability distribution 
$(\pi(k), k \in \cK)$. If the key $k$ is selected, then fingerprints are 
assigned according to $C_k$ and tracing is done using $D_{1k}$ and $D_{2k}$. 
The code resulting from this random experiment is called a (two-level) 
{\em randomized code} and is denoted by $(\cC,\cD_1,\cD_2)$.   

Following the standard convention in cryptography of the system design
being publicly available, we allow the users to have knowledge of the family of codes
$\{(C_k,D_{1k},D_{2k})\}$ and the distribution $\pi(\cdot)$, while the exact key 
choice is kept secret by the distributor. 

Consider a coalition $U$ of size $t$. Any attack by the coalition 
can be modeled as a randomized strategy $V(\cdot|\cdot,\dots,\cdot)$, 
where $V(\bsy|\bsx_{1},\dots,\bsx_{t})$ 
gives the probability that the coalition creates $\bsy$ given that it observes 
the fingerprints $\bsx_{1},\dots,\bsx_{t}$. 
Our interest is in a special class of strategies which satisfy
the restrictions (\ref{eqn:twolev-env}) in creating a forgery. 
A strategy $V$ is called {\em admissible} if
$$
V(\bsy|\bsx_{1},\dots,\bsx_{t}) = 0 \text{ for all } 
\bsy \notin \cE(\bsx_{1},\dots,\bsx_{t}).
$$
Let $\cV_{t}$ denote the class of admissible strategies. 

Denote the random forgery generated by $U$ using the strategy $V$ by 
$\bsY_{\cC,U,V}$. The distributor, on observing the forged fingerprint, 
employs the decoders $D_{1k}$ and $D_{2k}$ while using the key $k$. 
For a given coalition $U$ and strategy $V$, we define the following 
error probabilities: 
\begin{align*} 
e_1(\cC,\cD_1,U,V) & = \bP{ \cD_1(\bsY_{\cC,U,V}) \notin \cG(U) } \\
	& = \bbE{}{\sum_{\bsy:D_{1K}(\bsy) \notin \cG(U)} V(\bsy|C_K(U))}, \\
e_2(\cC,\cD_2,U,V) & = \bP{ \cD_2(\bsY_{\cC,U,V}) \notin U } \\
 & = \bbE{}{\sum_{\bsy:D_{2,K}(\bsy) \notin U} V(\bsy|C_K(U))},
\end{align*}
where the expectation is over the r.v. $K$ with distribution $\pi(k)$. 

\begin{definition} 
\label{def:twolev-fing} 
A randomized code $(\cC,\cD_1,\cD_2)$ is said to be a 
$(t_1,t_2)$-\emph{fingerprinting with $\veps$-error} 
where $t_1 > t_2$ if: 
\begin{itemize}
\item[(a)] For any coalition $U$ of size at most $t_2$ and any 
admissible strategy $V$, the error probability $e_2(\cC,\cD_2,U,V) \le \veps$.
\item[(b)] For any coalition $U$ of size at most $t_1$ and any 
admissible strategy $V$, the error probability $e_1(\cC,\cD_1,U,V) \le \veps$.
\end{itemize}
\end{definition}

We observe that an $(n,M_1,M_2)_q$ two-level code which is 
$(t_1,t_2)$-fingerprinting has the $t_2$-fingerprinting property when viewed as 
an $(n,M_1M_2)_q$ one-level code; in addition, 
for the larger size-$t_1$ coalitions,
the tracing algorithm can locate a group containing one of the pirates 
with high probability.

A rate pair $(R_1,R_2)$ is said to be {\em achievable} for $q$-ary 
$(t_1,t_2)$-fingerprinting if there exists a sequence of 
$(n,q^{nR_{1n}},q^{nR_{2n}})_q$ randomized codes that are 
$(t_1,t_2)$-fingerprinting with error probability $\veps_n$
such that
$$ \lim_{n \to \infty} \veps_n =0, \qquad 
\liminf_{n \to \infty} R_{in} = R_i, \quad i=1,2.$$ 

The goal of this paper is to investigate constructions of two-level 
fingerprinting codes and to characterize the corresponding set of achievable 
rate pairs.

\begin{remark} \label{rem:twolev-obs}
\begin{enumerate}
\item If an $(n,M_1,M_2)_q$ two-level code is $(t_1,t_2)$-fingerprinting 
(resp., TA), then choosing any single user from every 
group forms an $(n,M_1)_q$ one-level code that is $t_1$-fingerprinting (resp., TA).
\item If an $(n,M_1M_2)_q$ one-level code is $t_1$-fingerprinting
(resp., TA), then for any $t_2 < t_1$, it can also be treated as a 
$(n,M_1,M_2)_q$ two-level code that is $(t_1,t_2)$-fingerprinting 
(resp., TA).
\end{enumerate}
\end{remark}

\section{Traceability Codes} \label{sect:twolev-ta}

It is known \cite{chor-fiat-naor-pin} that a one-level 
code of length $n$ is $t$-TA if the distance between
any pair of fingerprints is strictly greater than $n(1-1/t^2)$.
We wish to obtain an analogous result for the case of two-level codes. 

For a given two-level code $C$, we define the following minimum distances:
\begin{align}
d_1(C) & := \min_{\substack{
				\bsu,\bsv \in [M_1] \times [M_2]\\
				u_1 \neq v_1}} \dist(C(\bsu),C(\bsv)), 
				\label{eqn:twolev-d1} \\
d_2(C) & := \min_{\substack{
				\bsu,\bsv \in [M_1] \times [M_2]\\
				u_2 \neq v_2}} \dist(C(\bsu),C(\bsv)).
				\label{eqn:twolev-d2}
\end{align}
Let $d(C) = \min \left( d_1(C),d_2(C) \right)$. 

\begin{proposition} \label{prop:twolev-ta}
Suppose $t_1 > t_2$ and $C$ is a two-level code of length $n$ with
$d_1(C)> n(1-1/t_1^2) \quad \text{and} \quad d_2(C) > n(1-1/t_2^2).$
Then $C$ is $(t_1,t_2)$-TA.
\end{proposition}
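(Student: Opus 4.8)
The plan is to adapt the classical Chor--Fiat--Naor argument for one-level traceability codes to the two-level setting, treating the two parts of Definition~\ref{def:twolev-ta} separately but with an identical combinatorial core. Recall the one-level fact: if $C$ has minimum distance $d > n(1-1/t^2)$, then for any coalition $U$ of size at most $t$ and any $\bsy \in \cE(C(U))$, the codeword closest to $\bsy$ (in Hamming distance) belongs to $U$. The reason is a counting/pigeonhole estimate: since $\bsy$ agrees with the coalition codewords on all undetectable coordinates and each detectable coordinate matches at least one pirate, some pirate $\bsu^\ast$ agrees with $\bsy$ on at least $n/t$ coordinates, i.e. $\dist(C(\bsu^\ast),\bsy) \le n(1-1/t)$; meanwhile any non-pirate $\bsw$ that were closer would have to agree with $\bsy$ on more than $n/t$ coordinates, and on each such coordinate $\bsy$ equals some pirate's symbol, so $\bsw$ would collide with the pirates on too many coordinates, forcing $\dist(C(\bsw),C(\bsu_j)) \le n(1-1/t^2)$ for some pirate $j$ by another averaging step --- contradicting the distance bound.

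First I would prove part~(a). Let $U$ be a coalition of size $s \le t_2$ and $\bsy \in \cE(C(U))$. I claim the MD decoder $D_2$ returns an element of $U$. If not, the closest codeword is some $C(\bsw)$ with $\bsw \notin U$, and running the one-level pigeonhole argument above with $t = t_2$ gives a pirate $\bsu_j \in U$ with $\dist(C(\bsw),C(\bsu_j)) \le n(1-1/t_2^2)$. Now the key point is that $\bsw \neq \bsu_j$ implies they differ in \emph{some} coordinate of the pair, i.e. either $w_1 \neq (u_j)_1$ or $w_2 \neq (u_j)_2$; in either case $\dist(C(\bsw),C(\bsu_j)) \ge \min(d_1(C),d_2(C)) = d(C)$. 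Since both $d_1(C) > n(1-1/t_1^2) > n(1-1/t_2^2)$ (using $t_1 > t_2$) and $d_2(C) > n(1-1/t_2^2)$, we get $d(C) > n(1-1/t_2^2)$, contradicting the bound on $\dist(C(\bsw),C(\bsu_j))$. Hence $D_2(\bsy) \in U$.

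For part~(b), let $U$ have size $s \le t_1$ and $\bsy \in \cE(C(U))$; I must show $D_1(\bsy) = \cG(D_2(\bsy)) \in \cG(U)$. Suppose not: then $D_2(\bsy) = \bsw$ with $\cG(\bsw) = w_1 \notin \cG(U)$, so in particular $w_1 \neq (u)_1$ for \emph{every} $\bsu \in U$. Apply the one-level pigeonhole argument with $t = t_1$ to obtain a pirate $\bsu_j \in U$ with $\dist(C(\bsw),C(\bsu_j)) \le n(1-1/t_1^2)$. But $w_1 \neq (u_j)_1$, so by definition \eqref{eqn:twolev-d1} we have $\dist(C(\bsw),C(\bsu_j)) \ge d_1(C) > n(1-1/t_1^2)$, a contradiction. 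Therefore $D_1(\bsy) \in \cG(U)$.

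The main obstacle --- really the only nontrivial ingredient --- is establishing the pigeonhole estimate that, for a coalition of size at most $t$ and a forgery $\bsy$ in its envelope, any non-pirate codeword closer to $\bsy$ than all pirate codewords must be within Hamming distance $n(1-1/t^2)$ of some pirate codeword; this is exactly the content of the cited result \cite{chor-fiat-naor-pin} and I would invoke it (or reprove it in two lines: a candidate $\bsw$ closer than every pirate agrees with $\bsy$ in $>n/t$ places since some pirate agrees with $\bsy$ in $\le n - n/t$ places by averaging over the $\le t$ pirates covering the $\le n$ coordinates; then those $>n/t$ agreement coordinates of $\bsw$ with $\bsy$ are distributed among the $t$ pirates, so $\bsw$ shares $>n/t^2$ coordinates with some single pirate). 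Everything else is just bookkeeping: the novelty is merely routing the resulting contradiction through the appropriate one of $d_1(C)$, $d_2(C)$ according to whether we are bounding $D_2$ (where any coordinate-disagreement of the pair suffices, so $d(C)$ is used) or $D_1$ (where specifically the group coordinate disagrees, so $d_1(C)$ is used), together with the observation that $t_1 > t_2$ makes $d_1(C)$ automatically large enough for the part~(a) argument as well.
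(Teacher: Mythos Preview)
Your proposal is correct and follows essentially the same approach as the paper: part~(a) reduces to the one-level result via $d(C)>n(1-1/t_2^2)$, and part~(b) runs the same pigeonhole count but routes the contradiction through $d_1(C)$ alone, using that the putative wrong output differs from every pirate in the group coordinate. The only cosmetic difference is that the paper phrases~(b) directly (bounding the agreements of any out-of-group user with $\bsy$ by $t_1(n-d_1(C))<n/t_1$) rather than by contradiction; also, in your parenthetical sketch the phrase ``some pirate agrees with $\bsy$ in $\le n-n/t$ places'' should read ``in $\ge n/t$ places'' (equivalently, is at distance $\le n-n/t$).
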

\begin{proof}
It is straightforward to see that the assumptions in the proposition imply that 
$d(C) > n(1-1/t_2^2)$.  Therefore, property (a) in Definition \ref{def:twolev-ta} 
follows directly from the result for one-level codes.

Next, we show that property (b) is a consequence of $d_1(C)> n(1-1/t_1^2)$.
Let $U$ be a coalition of size at most $t_1$ and $\bsy \in \cE(C(U))$.
Then, there exists some user $\bsu \in U$ who coincides with $\bsy$ in at least 
$n/t_1$ coordinates. For any user $\bsu^\prime$ such that $\cG(\bsu^\prime) \notin \cG(U)$,
the number of agreements with $\bsy$ is at most $t_1(n-d_1(C)) < \frac n{t_1}$, 
thus establishing property (b).
\end{proof}

\remove{As a consequence, ideas used for constructing unequal error protection codes 
can be used to construct two-level traceability codes. As in the
one-level case, to be able to construct large-size codes based on the
above sufficient conditions, one needs large alphabets. }

\section{Fingerprinting Codes} \label{sect:twolev-fing}

For $w \in [n]$, denote $\cS_{w,n} := \{\bsx \in \cQ^n: \wt \bsx = w \}$. 
For $R_1,R_2 \in [0,1]$,
define $M_{1n}=\lfloor q^{nR_1} \rfloor$, $M_{2n}= \lfloor q^{nR_2} \rfloor$.
Fix $\omega \in [0,1]$. We take $n$ such that $w=\omega n$ is an integer
and construct an $(n,M_{1n},M_{2n})_q$ 
two-level randomized code $(\cC^{\omega}_{n},\cD^{\omega}_{1n},\cD^{\omega}_{2n})$
as follows. 

For $i \in [M_{1n}]$, pick vectors $\bsR_i$ independently and uniformly 
at random from $\cQ^n$. We will refer to the $\bsR_i$'s as ``centers''.
Choose $\bsS_{ij},~(i,j) \in [M_{1n}] \times [M_{2n}]$, independently and uniformly 
at random from $\cS_{w,n}$. Generate $M_{1n}M_{2n}$ fingerprints
$$\bsX_{ij} = \bsR_i + \bsS_{ij}, \quad (i,j) \in [M_{1n}] \times [M_{2n}]$$
and assign $\bsX_{ij}$ as the fingerprint for user $(i,j)$. 

Once the fingerprints are assigned, tracing is based on the 
MD decoder (\ref{eqn:twolev-MDdec}). The MD decoder may be sub-optimal in 
general; however, it is amenable for analysis in our construction. 

In the following subsections, we analyze the error probability and
characterize the achievable rate pairs 
for the above construction. 
The lemmas below will be useful in the analysis.

\begin{lemma} \label{lem:twolev-indep}
Let $\bsS$ have a uniform distribution on $\cS_{w,n}$. Then, for $l \in [n]$
and $a \in \cQ \backslash \{0\}$, $\bP{S_l=a} = \omega/(q-1)$. Moreover, 
the r.v.'s $\{S_l, l \in [n] \}$ are asymptotically pairwise independent.
\end{lemma}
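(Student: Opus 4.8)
The plan is to compute the relevant probabilities by straightforward counting on the sphere $\cS_{w,n}$, and then to quantify ``asymptotic pairwise independence'' as the statement that $\bP{S_k=a, S_l=b} = \bP{S_k=a}\bP{S_l=b}(1+o(1))$ for fixed symbols $a,b$ and distinct coordinates $k,l$, with the $o(1)$ term uniform in the choice of $k,l,a,b$. First I would handle the marginal. Since $\bsS$ is uniform on $\cS_{w,n}$, by symmetry under coordinate permutations the event $S_l \neq 0$ has probability $w/n = \omega$, and by symmetry under permutations of the nonzero alphabet $\cQ\setminus\{0\}$ each nonzero value $a$ is equally likely given $S_l \neq 0$, so $\bP{S_l = a} = \omega/(q-1)$. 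Alternatively one counts directly: $|\cS_{w,n}| = \binom{n}{w}(q-1)^w$, and the number of $\bsx \in \cS_{w,n}$ with $x_l = a$ is $\binom{n-1}{w-1}(q-1)^{w-1}$, whose ratio is $\frac{w}{n}\cdot\frac{1}{q-1} = \frac{\omega}{q-1}$.

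For the pairwise statement, fix distinct $k,l \in [n]$ and nonzero $a,b$ (the cases where $a$ or $b$ equals $0$ are analogous and in fact easier). Counting the vectors in $\cS_{w,n}$ with $x_k = a$ and $x_l = b$ gives $\binom{n-2}{w-2}(q-1)^{w-2}$, so
\begin{align*}
\bP{S_k=a,\,S_l=b}
 &= \frac{\binom{n-2}{w-2}(q-1)^{w-2}}{\binom{n}{w}(q-1)^{w}}
 = \frac{w(w-1)}{n(n-1)}\cdot\frac{1}{(q-1)^2}.
\end{align*}
Since $w = \omega n$, the fraction $\frac{w(w-1)}{n(n-1)} = \omega^2\cdot\frac{1-1/w}{1-1/n} \to \omega^2$ as $n\to\infty$, and hence
$$\bP{S_k=a,\,S_l=b} = \frac{\omega^2}{(q-1)^2}\big(1+O(1/n)\big) = \bP{S_k=a}\bP{S_l=b}\big(1+O(1/n)\big),$$
with the $O(1/n)$ bound independent of $k,l,a,b$; this is the asserted asymptotic pairwise independence. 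One then remarks that the same computation with $a=0$ and/or $b=0$ (using $\binom{n-2}{w-1}$ or $\binom{n-2}{w}$ in place of $\binom{n-2}{w-2}$) yields the analogous conclusion, so the claim holds for all pairs of values in $\cQ$.

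I do not anticipate a genuine obstacle here; the only point requiring care is making the phrase ``asymptotically pairwise independent'' precise (it should be read as $n\to\infty$ with $\omega$ fixed, the quantities $k,l,a,b$ ranging over finite sets) and ensuring the error term is uniform, which is immediate from the explicit expression $\frac{w(w-1)}{n(n-1)}$. I would state the uniform-$O(1/n)$ bound explicitly in the proof, since it is exactly this uniformity that will be needed when the lemma is invoked in the error-probability analysis of the construction.
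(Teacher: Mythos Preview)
The paper states this lemma without proof, so there is no argument to compare against; your counting approach is the natural one and is correct. The marginal computation and the pairwise computation via $\binom{n-2}{w-2}(q-1)^{w-2}/\binom{n}{w}(q-1)^w = \frac{w(w-1)}{n(n-1)(q-1)^2}$ are exactly what one would expect, and your observation that the $O(1/n)$ error is uniform in $k,l,a,b$ is the right thing to record for the later Chebyshev-type application in Lemma~\ref{lem:twolev-chebyshev}.
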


\begin{lemma} \label{lem:twolev-chebyshev}
Fix $p \in [0,1]$ and $\veps >0$. For $l \in [n]$, let $Z_l$ be a Bernoulli r.v.
with $\bP{Z_l = 1} = p$, and let $\{Z_l, l \in [n]\}$ be pairwise independent.
Then, with $Z:= \sum_{l \in [n]} Z_l$, we have 
$$\bP{Z \notin [n(p-\veps),n(p+\veps)]} \le \frac{p(1-p)}{\veps^2 n}.$$
\end{lemma}

{\em Notation:} For a coalition $U =\{ \bsu^1, \dots, \bsu^t \}$, 
we denote the realizations of $\bsX_{\bsu^i}, \bsR_{u_1^i},\bsS_{\bsu^i}$
by $\bsx_i,\bsr_i,\bss_i$ respectively, with $\bsx_i = \bsr_i + \bss_i, i \in [t]$. 
Let $\bsz \in \cQ^t$ be a vector. Denote by $s_{\bsz}(\bsx_1,\dots,\bsx_t)$ the number
of columns equal to $\bsz^T$ in the matrix whose rows are $\bsx_1,\dots,\bsx_t$. 
For $p \in [0,1]$ and $\veps >0$, define
$I_n(p,\veps) := [n(p-\veps),n(p+\veps)].$

\subsection{$(t,1)$-fingerprinting}

First, we consider the $(2,1)$-fingerprinting property. 
This is the simplest case of two-level fingerprinting that goes beyond 
the known techniques for one-level codes. Although coalitions of size 1 are 
trivial to handle for one-level fingerprinting, it is still non-trivial to 
construct a $(2,1)$-fingerprinting code. 

\begin{theorem} \label{thm:twolev-21}
\remove{
$(R_1,R_2)$ is an achievable rate pair for $q$-ary $(2,1)$-fingerprinting 
if there exists an $\omega \in [0,(q-1)/2q]$ such that
}
For any $\omega \in [0,(q-1)/2q]$, the randomized code 
$(\cC^{\omega}_{n},\cD^{\omega}_{1n},\cD^{\omega}_{2n})$
is $(2,1)$-fingerprinting with error probability decaying to 0 if
\begin{align}
R_1 & < 1-h ((q-1)/2q + \omega), 
\label{eqn:twolev-21-1}\\
R_2 & < h(\omega). \label{eqn:twolev-21-2}
\end{align}
\end{theorem}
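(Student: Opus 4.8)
The plan is to verify conditions (a) and (b) of Definition~\ref{def:twolev-fing} in turn; condition (a) reduces to a collision bound, and for condition (b) I will split according to whether the at most two pirates lie in one group or in two distinct groups. The structural fact driving everything is that $\dist(\bsX_{ij},\bsR_i)=\wt{\bsS_{ij}}=\omega n$ for every user $(i,j)$, so the fingerprints of a whole group lie in the Hamming ball of radius $\omega n$ about that group's center $\bsR_i$; hence a center far from the forgery forces \emph{all} users of its group to be far, and one may union-bound over the $M_{1n}\doteq q^{nR_1}$ groups rather than over all $M_{1n}M_{2n}$ users. The bounds below do not depend on the particular coalition (of the allowed size) or admissible strategy, so they yield a single error sequence $\veps_n\to0$. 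We may assume $\omega<(q-1)/2q$, since otherwise (\ref{eqn:twolev-21-1}) is unsatisfiable.

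\emph{Condition (a).} A coalition $\{(i,j)\}$ of size one can output only $\bsy=\bsX_{ij}$, so $\cD_2$ errs only if some other user has the identical fingerprint. A user $(i,j')$ with $j'\neq j$ collides iff $\bsS_{ij'}=\bsS_{ij}$, of probability $1/|\cS_{w,n}|\doteq q^{-nh(\omega)}$; a user $(i',j')$ with $i'\neq i$ collides iff $\bsR_{i'}=\bsX_{ij}-\bsS_{i'j'}$, of probability $q^{-n}$ since $\bsR_{i'}$ is uniform on $\cQ^n$ and independent of the right-hand side. The union bounds give contributions $\doteq q^{n(R_2-h(\omega))}$ and $\le q^{n(R_1+R_2-1)}$, both tending to $0$: the first by (\ref{eqn:twolev-21-2}), the second because (\ref{eqn:twolev-21-1}), (\ref{eqn:twolev-21-2}), the monotonicity of $h$ on $[0,(q-1)/q]$, and $\omega\le(q-1)/2q$ give $R_1+R_2<1-h((q-1)/2q+\omega)+h(\omega)\le1$. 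As $\cD_1=\cG\circ\cD_2$, this also settles condition (b) for coalitions of size $\le1$, leaving coalitions $U=\{\bsu^1,\bsu^2\}$ of size two.

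\emph{Condition (b), $|U|=2$.} Let $\bsx_k=C(\bsu^k)$ and $D=\{l:x_{1l}\neq x_{2l}\}$. For any $\bsy\in\cE(\bsx_1,\bsx_2)$ the forgery agrees with both $\bsx_k$ off $D$ and with exactly one of them on $D$, so $\dist(\bsy,\bsx_1)+\dist(\bsy,\bsx_2)=|D|$ and $\min_k\dist(\bsy,\bsx_k)\le|D|/2$. If $\cD_1$ errs, some user $\bsv$ with $i':=\cG(\bsv)\notin\cG(U)$ has $\dist(C(\bsv),\bsy)\le\min_k\dist(\bsy,\bsx_k)$, hence $\dist(\bsR_{i'},\bsy)\le\min_k\dist(\bsy,\bsx_k)+\omega n$ by the triangle inequality and $\wt{\bsS_{\bsv}}=\omega n$. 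Fixing the coalition's realized centers, offsets, and forgery, every $\bsR_{i'}$ with $i'\notin\cG(U)$ is still uniform on $\cQ^n$ and independent, so $\bP{\dist(\bsR_{i'},\bsy)\le\lambda n}\doteq q^{-n(1-h(\lambda))}$ for $\lambda\le(q-1)/q$. \emph{(Same group.)} If $\bsu^1=(i,j_1)$ and $\bsu^2=(i,j_2)$, then $D$ lies in the union of the supports of $\bsS_{ij_1}$ and $\bsS_{ij_2}$, so $|D|\le2\omega n$ and $\min_k\dist(\bsy,\bsx_k)\le\omega n$; an error thus forces $\dist(\bsR_{i'},\bsy)\le2\omega n$ for some $i'\neq i$, and the union over $\le M_{1n}$ groups is $\doteq q^{n(R_1-1+h(2\omega))}\to0$, since $\omega\le(q-1)/2q$ gives $2\omega\le(q-1)/2q+\omega$, whence $R_1<1-h((q-1)/2q+\omega)\le1-h(2\omega)$. \emph{(Distinct groups.)} If $\cG(\bsu^1)=i_1\neq i_2=\cG(\bsu^2)$, then $|D|=\dist(\bsX_{\bsu^1},\bsX_{\bsu^2})$; since $\bsR_{i_1}-\bsR_{i_2}$ has i.i.d.\ uniform coordinates, conditionally on the two offsets $|D|$ is a sum of $n$ independent $\mathrm{Bernoulli}((q-1)/q)$ variables, so by Lemma~\ref{lem:twolev-chebyshev}, $|D|\le n((q-1)/q+\veps')$ except with probability $O(1/n)$, for any fixed $\veps'>0$. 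On that event $\min_k\dist(\bsy,\bsx_k)\le n((q-1)/2q+\veps'/2)$, so an error forces $\dist(\bsR_{i'},\bsy)\le n((q-1)/2q+\omega+\veps'/2)$ for some $i'\notin\{i_1,i_2\}$, and the union over $\le M_{1n}$ groups is $\doteq q^{n(R_1-1+h((q-1)/2q+\omega+\veps'/2))}$; this tends to $0$ once $\veps'$ is small enough that the argument of $h$ stays below $(q-1)/q$ and, by continuity, $R_1<1-h((q-1)/2q+\omega+\veps'/2)$. Adding the $O(1/n)$ term from the complementary event completes the estimate.

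\emph{Main obstacle.} The delicate point is obtaining the threshold $1-h((q-1)/2q+\omega)$ for $R_1$ instead of the weaker $1-h((q-1)/2q)-h(\omega)$ that a user-level union bound yields: this requires the group-level union bound, which works precisely because every fingerprint is within $\omega n$ of its center, and it is the hypothesis $\omega\le(q-1)/2q$ that both keeps the relevant entropies in the increasing range of $h$ and makes the same-group case no worse than the distinct-group case.
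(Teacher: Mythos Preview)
Your proof is correct and follows essentially the same route as the paper: for size-$1$ coalitions you bound the collision probability, and for size-$2$ coalitions you pass from a user-level to a group-level union bound via the triangle inequality $\dist(\bsR_{i'},\bsy)\le\dist(\bsX_{i'j'},\bsy)+\omega n$, then control $\min_k\dist(\bsy,\bsx_k)$ through a concentration estimate on the number of detectable coordinates. The only cosmetic differences are that (i) for size-$1$ cross-group collisions you use the exact equality $\bsR_{i'}=\bsX_{ij}-\bsS_{i'j'}$ and union-bound over all users, whereas the paper relaxes to $\dist(\bsR_{i'},\bsX_{ij})\le w$ and union-bounds over groups---both yield conditions implied by (\ref{eqn:twolev-21-1})--(\ref{eqn:twolev-21-2}); and (ii) you handle the same-group size-$2$ case by the deterministic bound $|D|\le2\omega n$, while the paper simply asserts the distinct-group case is dominant.
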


{\em Discussion:} The above theorem provides a set of achievable
rate pairs for $q$-ary $(2,1)$-fingerprinting. Let us fix $\cQ = \{0,1\}$ and 
put the result in the perspective of bounds available for one-level fingerprinting 
(see Figure \ref{fig:twolev-graph}).

\begin{figure}[ptb]
\centering \includegraphics[width=3.6in]{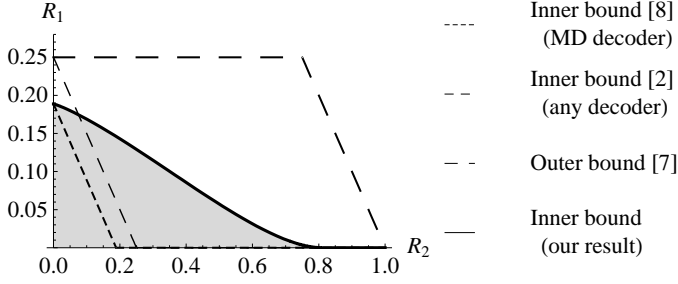}
\caption[Achievable rate region for binary $(2,1)$-fingerprinting]
{Achievable rate region for binary $(2,1)$-fingerprinting. The bounds 
from previous works follow by using one-level fingerprinting schemes.}
\label{fig:twolev-graph}
\end{figure}

\begin{itemize}
\item {\em Outer bound}: Since the $(2,1)$-fingerprinting property implies
one-level 1-fingerprinting, we should have $R_1 + R_2 \le 1$. Moreover, 
$R_1$ cannot exceed the rate of a one-level 2-fingerprinting code
(by part (1) of Remark \ref{rem:twolev-obs}); thus, any upper bound for it also 
applies to $R_1$. 
In particular, by \cite{huang-moulin} $R_1 \le 0.25$.

\item {\em Inner bound}: By part (2) of Remark \ref{rem:twolev-obs}, 
the rate pairs $(R_1,R_2)$ such that $R_1 + R_2 < 0.188$ 
are achievable (with MD decoding) using the $2$-fingerprinting code
given in \cite{elgamal}. In fact, by allowing other decoders, we can do better,
achieving $R_1 + R_2 < 0.25$ through the $2$-fingerprinting construction in \cite{abd}.
\end{itemize}

\begin{proof}(of Theorem \ref{thm:twolev-21})
{\em Size-1 coalitions:} Let $\bsu = (u_1,u_2)$ be the pirate. 
For size-1 coalitions, the envelope is degenerate as it consists of only the user's own 
fingerprint. Now,
\begin{align*}
& e_2(\cC^{\omega}_{n},\cD^{\omega}_{2n},\bsu) \\
& =  \bP{\exists \bsu^{\prime} \neq \bsu: \bsX_{\bsu^{\prime}} = \bsX_{\bsu}} \\
& \le \bP{\exists \bsu^{\prime} \neq \bsu:  u_1^{\prime}=u_1, \bsX_{\bsu^{\prime}} = \bsX_{\bsu}} \\
& \qquad + \bP{\exists \bsu^{\prime} \neq \bsu: u_1^{\prime} \neq u_1, \bsX_{\bsu^{\prime}} = \bsX_{\bsu}} \\
& \overset{(a)}{\le} \bP{\exists u_2^{\prime} \neq u_2: \bsS_{u_1u_2^{\prime}} = \bsS_{u_1u_2}} \\
& \qquad + \bP{\exists u_1^{\prime} \neq u_1: \dist(\bsR_{u_1^{\prime}},\bsX_{\bsu}) \le w} \\
& \overset{(b)}{\le} q^{nR_2} \bP{\bsS_{u_1u_2^{\prime}} = \bsS_{u_1u_2}} 
+ q^{nR_1} \bP{\dist(\bsR_{u_1^{\prime}},\bsX_{\bsu}) \le w} \\
& \doteq q^{-n(h(\omega) - R_2)} + q^{-n(1-h(\omega) - R_1)},
\end{align*}
where (a) is due to the fact that if the fingerprint of another user 
matches with the pirate's fingerprint, then the corresponding center is within distance 
$w$ from the pirate's fingerprint, and (b) follows from the union bound.
Consequently, the error probability for size-1 coalitions approaches 0 if 
$R_2 < h(\omega)$ and $R_1 < 1-h(\omega)$.

{\em Size-2 coalitions:} There are two possibilities: either both users are in the
same group or they are in different groups. It turns out that the latter case
is the dominant one. Since the analysis for the two cases is similar, we only consider the 
latter case below. 

Let $U=\{ \bsu^1,\bsu^2 \}$ be such a coalition.
For any strategy $V \in \cV_2$, we have
\begin{align}
& e_1(\cC^{\omega}_{n},\cD^{\omega}_{1n},U,V) \notag \\
& = \sum_{\bsr_1,\bsr_2,\bss_1,\bss_2} \bP{
\bsr_1,\bsr_2,\bss_1,\bss_2} \sum_{\bsy} V(\bsy|\bsx_1,\bsx_2) \notag \\
& \quad \qquad \times \bP{\cD^{\omega}_{1n}(\bsy) \notin \cG(U) \Big| 
\bsr_1,\bsr_2,\bss_1,\bss_2
}. 
\label{eqn:twolev-21fing}
\end{align}
Consider the inner probability term
\begin{align*}
& \bP{\cD^{\omega}_{1n}(\bsy) \notin \cG(U) \Big| 
\bsr_1,\bsr_2,\bss_1,\bss_2
} \\
& \overset{(a)}{\le} \bP{ \exists \bsu^{\prime} \notin U: u_1^{\prime} \notin \cG(U),
\dist(\bsX_{\bsu^{\prime}},\bsy) \le \dist(\{\bsx_1,\bsx_2\},\bsy) } \\
& \overset{(b)}{\le} \bP{ \exists u_1^{\prime} \notin \cG(U):
\dist(\bsR_{u_1^{\prime}},\bsy) \le \dist(\{\bsx_1,\bsx_2\},\bsy) + w} 
\end{align*}\begin{align*}
& \le q^{nR_1} \bP{\dist(\bsR_{u_1^{\prime}},\bsy) \le 
\dist(\{\bsx_1,\bsx_2\},\bsy) + w},
\end{align*}
where we have exploited the independence in the construction in (a), and
(b) follows because if the fingerprint of another user is within distance $d$
from $\bsy$, then the corresponding center is within $d+w$ from $\bsy$. 
For $\veps > 0$, define 
$$\cT^{\veps}_n:= \left \{ (\bsr_1,\bsr_2,\bss_1,\bss_2): \begin{array}{l}
\forall a \in \cQ, \\
\s_{(a,a)}(\bsx_1,\bsx_2) \in I_n(1/q^2,\veps/q)
\end{array} \right \}.$$
Observe that $\bsX_{\bsu^1}$ and $\bsX_{\bsu^2}$ are independent and uniformly 
distributed over $\cQ^n$.
Therefore, using Lemma \ref{lem:twolev-chebyshev}, it is a simple matter to show that  
$\bP{(\bsR_{u_1^1},\bsR_{u_2^1},\bsS_{\bsu^1},\bsS_{\bsu^2}) \notin \cT^{\veps}_n }$ decays to 0
as $n \to \infty$. Now, take any $(\bsr_1,\bsr_2,\bss_1,\bss_2) \in \cT^{\veps}_n$
and $\bsy \in \cE(\bsx_1,\bsx_2)$. The number of undetectable positions in 
$\{\bsx_1,\bsx_2\}$ is at least $n(1/q - \veps)$, implying that 
$\dist(\{\bsx_1,\bsx_2\},\bsy) \le \frac{n}{2}\left(1-\frac{1}{q} +\veps \right)$. 
Thus, in this case
\begin{align*}
& q^{nR_1} \bP{\dist(\bsR_{u_1^{\prime}},\bsy) \le \dist(\{\bsx_1,\bsx_2\},\bsy) + w} \\
& \le q^{nR_1} \bP{\dist(\bsR_{u_1^{\prime}},\bsy) \le \frac{n}{2}\left(1-\frac{1}{q} +\veps \right) + w} \\
& \doteq q^{-n(1-h \left( \frac{1}{2}\left(1-\frac{1}{q} +\veps \right) + \omega \right) -R_1)}.
\end{align*}
Substituting the above in (\ref{eqn:twolev-21fing}) and taking $\veps \to 0$, 
we conclude that the error probability for size-2 coalitions approaches 0 if 
(\ref{eqn:twolev-21-1}) holds. 
\end{proof}

We now extend the techniques to larger coalitions.

\begin{theorem} \label{thm:twolev-t1}
\remove{
$(R_1,R_2)$ is an achievable rate pair for $q$-ary $(t,1)$-fingerprinting if
there exists an $\omega$ with
$\frac{t-1}{t} \big(1 - \frac{1}{q^{t-1}} \big) + \omega \le \frac{q-1}{q}$,
such that
}
For any $\omega$ such that 
$\frac{t-1}{t} \big(1 - \frac{1}{q^{t-1}} \big) + \omega \le \frac{q-1}{q}$, 
the randomized code $(\cC^{\omega}_{n},\cD^{\omega}_{1n},\cD^{\omega}_{2n})$
is $(t,1)$-fingerprinting with error probability decaying to 0 if
\begin{align}
R_1 & < 1-h \left(\frac{t-1}{t} \Big(1 - \frac{1}{q^{t-1}} \Big) + \omega \right), 
\label{eqn:twolev-t1-1}\\
R_2 & < h(\omega). \label{eqn:twolev-t1-2}
\end{align}
\end{theorem}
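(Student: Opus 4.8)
The plan is to generalize the argument for Theorem~\ref{thm:twolev-21} from pairs to arbitrary coalitions of size at most $t$, keeping the same two-pronged structure: (i) the size-$1$ coalition analysis gives the rate condition $R_2 < h(\omega)$ together with $R_1 < 1 - h(\omega)$, and (ii) the size-$t$ coalition analysis gives the stronger constraint \eqref{eqn:twolev-t1-1} on $R_1$. Since $h(\omega) \ge h\!\big(\tfrac{t-1}{t}(1-q^{-(t-1)}) + \omega\big)$ in the relevant range, condition \eqref{eqn:twolev-t1-1} dominates $R_1 < 1-h(\omega)$, so the two items of Definition~\ref{def:twolev-fing} both follow once I verify (i) and (ii). Item (a) (the $t_2 = 1$ requirement) is exactly the size-$1$ analysis already carried out verbatim in the proof of Theorem~\ref{thm:twolev-21}, so I only need to redo the group-decoder bound for coalitions $U$ with $|U| \le t$.

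So fix a coalition $U = \{\bsu^1,\dots,\bsu^t\}$ (the case $|U| < t$ is handled identically, and as before the worst case is when the pirates lie in distinct groups, so assume $\cG(\bsu^1),\dots,\cG(\bsu^t)$ are distinct). Conditioning on the realizations $\bsr_i,\bss_i$ and on the forgery $\bsy \in \cE(\bsx_1,\dots,\bsx_t)$, I bound the inner probability that some outside group's center is close to $\bsy$: exactly as in \eqref{eqn:twolev-21fing}, the event $\cD^{\omega}_{1n}(\bsy) \notin \cG(U)$ forces some $\bsR_{u_1'}$ with $u_1' \notin \cG(U)$ to satisfy $\dist(\bsR_{u_1'},\bsy) \le \dist(\{\bsx_1,\dots,\bsx_t\},\bsy) + w$, and a union bound over the $\le q^{nR_1}$ outside groups gives a factor $q^{nR_1}\,\bP{\dist(\bsR_{u_1'},\bsy) \le \dist(\{\bsx_1,\dots,\bsx_t\},\bsy) + w}$, with $\bsR_{u_1'}$ uniform on $\cQ^n$ and independent of everything. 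The key quantity to control is $\dist(\{\bsx_1,\dots,\bsx_t\},\bsy)$, i.e.\ the number of agreements between $\bsy$ and the closest pirate. Here I introduce a typical set $\cT^{\veps}_n$ analogous to the one in the proof of Theorem~\ref{thm:twolev-21}, requiring that for every $\bsz \in \cQ^t$ the column-count $s_{\bsz}(\bsx_1,\dots,\bsx_t)$ lies in $I_n(q^{-t},\veps)$ — since $\bsX_{\bsu^1},\dots,\bsX_{\bsu^t}$ are i.i.d.\ uniform on $\cQ^n$ (the centers $\bsR_{u_1^i}$ are independent across distinct groups), Lemma~\ref{lem:twolev-chebyshev} applied to the pairwise-independent column-indicator r.v.'s shows $\bP{(\bsr_\cdot,\bss_\cdot) \notin \cT^{\veps}_n} \to 0$.

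On $\cT^{\veps}_n$ the combinatorial heart of the argument is: for any $\bsy$ in the envelope, some pirate agrees with $\bsy$ in at least $n\big(1 - \tfrac{t-1}{t}(1-q^{-(t-1)})\big) - O(\veps n)$ coordinates — equivalently, $\dist(\{\bsx_1,\dots,\bsx_t\},\bsy) \le n\big(\tfrac{t-1}{t}(1-q^{-(t-1)}) + \veps'\big)$. This is where the peculiar expression $\tfrac{t-1}{t}(1-q^{-(t-1)})$ enters, and proving it is the main obstacle. The bound should follow by an averaging/worst-case argument over column types: a column of $(\bsx_1,\dots,\bsx_t)$ in which $k$ distinct symbols appear forces the adversary, when it picks $y_i$ to disagree with as many pirates as possible, to still agree with at least one pirate; splitting a count of how often $\bsy$ can avoid each pirate, the number of columns where a \emph{fixed} pirate $\bsu^j$ disagrees with $\bsy$ is maximized by a configuration with as many distinct symbols as possible, but since columns with all $t$ symbols equal (undetectable, forcing agreement with all pirates) number $\ge n(q^{-t} \cdot q) - O(\veps n) = n q^{-(t-1)} - O(\veps n)$ on $\cT^{\veps}_n$, the fraction of detectable columns is at most $1 - q^{-(t-1)} + O(\veps)$, and among those the best pirate agrees in at least a $1/t$ fraction by pigeonhole — giving total agreement $\ge n q^{-(t-1)} + \tfrac{1}{t}\,n(1 - q^{-(t-1)}) - O(\veps n)$, whose complement is $n\tfrac{t-1}{t}(1-q^{-(t-1)}) + O(\veps n)$ as claimed. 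Feeding this into the tail estimate $\bP{\dist(\bsR_{u_1'},\bsy) \le \delta n} \doteq q^{-n(1-h(\delta/n \cdot n))}$ with $\delta = \tfrac{t-1}{t}(1-q^{-(t-1)}) + \omega + \veps'$ (the extra $w = \omega n$ again because centers sit within distance $w$ of fingerprints), the exponent is $1 - h\big(\tfrac{t-1}{t}(1-q^{-(t-1)}) + \omega\big) - R_1 - O(\veps)$; substituting back into \eqref{eqn:twolev-21fing}, summing over realizations, and letting $\veps \to 0$ shows $e_1 \to 0$ under \eqref{eqn:twolev-t1-1}. Together with the size-$1$ analysis this establishes both clauses of Definition~\ref{def:twolev-fing}.
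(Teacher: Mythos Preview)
Your proposal is correct and follows essentially the same approach as the paper: reduce size-$1$ coalitions to the Theorem~\ref{thm:twolev-21} analysis, claim (as the paper does, without proof) that the dominant size-$t$ case is when the pirates lie in distinct groups, condition and union-bound exactly as in \eqref{eqn:twolev-21fing}, and use the typical-set bound on the number of undetectable columns to control $\dist(\{\bsx_1,\dots,\bsx_t\},\bsy)$. In fact you supply more detail than the paper, which simply asserts that the size-$2$ argument carries over verbatim; your pigeonhole computation---every pirate agrees on the $\simeq n q^{-(t-1)}$ undetectable columns, and by averaging some pirate agrees on at least a $1/t$ fraction of the remaining detectable columns---is precisely the omitted step yielding $\dist(\{\bsx_1,\dots,\bsx_t\},\bsy) \lesssim n\,\tfrac{t-1}{t}\big(1 - q^{-(t-1)}\big)$.
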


\begin{proof}
{\em Size-$1$ coalitions:} For a single pirate $\bsu$, the analysis in 
Theorem \ref{thm:twolev-21} proves that the probability of decoding error
approaches 0 if $R_2 < h(\omega)$ and $R_1 < 1-h(\omega)$.

{\em Size-$t$ coalitions:} It can be shown that the case where the $t$ pirates 
are in distinct groups is the dominant one. Once this is shown, we use exactly 
the same arguments as in the case of size-2 coalitions 
in Theorem \ref{thm:twolev-21}. We finally obtain that the error probability
for coalitions of size $t$ approaches 0 if (\ref{eqn:twolev-t1-1}) holds.
\end{proof}

\begin{remark} \label{rem:twolev-t1}
A sufficiently large alphabet is required in order for an $\omega$ satisfying
$\frac{t-1}{t} \left(1 - \frac{1}{q^{t-1}} \right) + \omega \le \frac{q-1}{q}$ to 
exist. For instance, it suffices to take $q\ge t+1.$
\end{remark}

\subsection{$(t,2)$-fingerprinting}
Let $q \ge 3$. For $\omega,\gamma,\alpha,\beta \in [0,1]$, with 
$\alpha \le 1-\gamma$, $\beta \le \gamma$, $\alpha + \beta \le \omega$, 
$\omega - \alpha \le \gamma$, let 
\begin{align*}
&\phi(\omega,\gamma,\alpha,\beta) \\
& := (1-\gamma) h\left(\frac{\alpha}{1-\gamma}\right)
+ (\gamma - \beta) h \left( \frac{\omega - \alpha - \beta}{\gamma -\beta} \right) \\
& \quad + \gamma h\left(\frac{\beta}{\gamma} \right) 
+ (\omega -\alpha) \log_q \left( \frac{q-2}{q-1} \right) 
- \beta \log_q(q-2).
\end{align*}
Let
\begin{align*}
\delta_1(\omega) & = \frac{1}{2} \left( 1-(1-\omega)^2-\frac{\omega^2}{q-1} \right), \\
\delta_2(\omega) & = \frac{1}{2} \left( 1-\frac{1}{q} \right), \\ 
{f}_1(\omega) &= \max_{\substack{\gamma,\alpha,\beta: \\
\omega^2 \le \gamma \le 1-(1-\omega)^2,\gamma - \beta + \alpha \le \delta_1(\omega)
}}\hspace*{-2mm}
 \phi(\omega,\gamma,\alpha,\beta), \\
{f}_2(\omega) &= 
\max_{\substack{
\gamma,\alpha,\beta: \\
\omega \left( \frac{q-1}{q} \right) \le \gamma \le 1-\frac{1-\omega}{q},
\gamma - \beta + \alpha \le \delta_2(\omega)
}}
\hspace*{-0.5em}
 \phi(\omega,\gamma,\alpha,\beta).
\end{align*}
\begin{theorem}
\remove{
Let $q \ge 3$. $(R_1,R_2)$ is an achievable rate pair for $q$-ary $(t,2)$-fingerprinting 
if there exists an $\omega$ with
$\frac{t-1}{t} \Big(1 - \frac{1}{q^{t-1}} \Big) + \omega \le \frac{q-1}{q}$,
such that
}
Let $q \ge 3$. For any $\omega$ such that 
$\frac{t-1}{t} \big(1 - \frac{1}{q^{t-1}} \big) + \omega \le \frac{q-1}{q}$, 
the randomized code $(\cC^{\omega}_{n},\cD^{\omega}_{1n},\cD^{\omega}_{2n})$
is $(t,2)$-fingerprinting with error probability decaying to 0 if
\begin{align}
R_1 & < 1-h \left(\frac{t-1}{t} \Big(1 - \frac{1}{q^{t-1}} \Big) + \omega \right), 
\label{eqn:twolev-t2-1}\\
R_2 & < h(\omega) - \max( {f}_1(\omega), {f}_2(\omega) ). \label{eqn:twolev-t2-2}
\end{align}
\end{theorem}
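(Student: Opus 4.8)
The plan is to follow the two-track structure already established in Theorems~\ref{thm:twolev-21} and \ref{thm:twolev-t1}. The bound on $R_1$ in (\ref{eqn:twolev-t2-1}) is identical to (\ref{eqn:twolev-t1-1}), so part~(b) of Definition~\ref{def:twolev-fing} — correctly tracing a \emph{group} for size-$t_1$ coalitions — is handled verbatim by the argument in Theorem~\ref{thm:twolev-t1}: reduce to the worst case of $t$ pirates in distinct groups, apply Lemma~\ref{lem:twolev-chebyshev} to the column-type counts of $\bsx_1,\dots,\bsx_t$ to show that with high probability an undetectable-coordinate fraction of at least $\tfrac1t(1-\tfrac1{q^{t-1}})-\veps$ forces $\dist(\{\bsx_i\},\bsy)$ to be small, and then union-bound over the $q^{nR_1}$ centers $\bsR_{u_1'}$ of outside groups. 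The only genuinely new content is part~(a): $e_2 \to 0$ for coalitions of size at most $2$, under the sharper rate constraint (\ref{eqn:twolev-t2-2}) involving $\phi$ and $\max(f_1,f_2)$.

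For the size-$\le 2$ exact-tracing analysis I would condition on the realizations $\bsr_1,\bsr_2,\bss_1,\bss_2$ (for a two-user coalition $U=\{\bsu^1,\bsu^2\}$) and, as before, bound $e_2$ by the probability that some outside user $\bsu'\notin U$ has $\dist(\bsX_{\bsu'},\bsy)\le \dist(C(U),\bsy)$, split according to whether $\bsu'$ lies in the group of $\bsu^1$, the group of $\bsu^2$, or a fresh group. The fresh-group term is dominated by the $R_1$ constraint and is not binding. The two in-group terms are the crux: there $\bsX_{\bsu'}=\bsR_{u_1}+\bsS$ with $\bsS$ uniform on $\cS_{w,n}$ \emph{independent} of $\bss_1,\bss_2$, so conditioned on the relevant center the event ``$\dist(\bsR_{u_1}+\bsS,\bsy)\le d$'' is a large-deviations event for a random weight-$w$ vector $\bsS$ whose coordinates are asymptotically pairwise independent (Lemma~\ref{lem:twolev-indep}). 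The exponent of this probability, maximized over the admissible forgery $\bsy\in\cE(\bsx_1,\bsx_2)$ and over the joint type of $(\bss_1,\bss_2,\bsy-\bsr_{u_1})$, is exactly what the function $\phi(\omega,\gamma,\alpha,\beta)$ computes: $\gamma$ is the fraction of coordinates where the would-be colliding vector $\bsS$ is nonzero in the ``shifted'' frame, $\alpha$ and $\beta$ track the overlaps with the two pirates' support patterns, and the linear $\log_q$ terms count the number of symbol choices consistent with the envelope constraint (value in $\{x_{1i},x_{2i}\}$) on the various coordinate classes. The constraints $\alpha\le 1-\gamma$, $\beta\le\gamma$, $\alpha+\beta\le\omega$, $\omega-\alpha\le\gamma$ are precisely the feasibility conditions for such a joint type, and the two regimes $f_1,f_2$ (with thresholds $\delta_1(\omega),\delta_2(\omega)$ and the ranges of $\gamma$) correspond to the two structurally different optimal pirate strategies — roughly, ``majority/copy one pirate'' versus ``push toward a uniform coordinate'' — exactly as the same-group versus different-group dichotomy appeared in Theorem~\ref{thm:twolev-21}.

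Concretely the steps are: (i) reduce $e_2$ to the in-group collision probability via the triangle-inequality/center argument of (b) in Theorem~\ref{thm:twolev-21}; (ii) define a typical set $\cT_n^\veps$ of realizations $(\bsr_1,\bsr_2,\bss_1,\bss_2)$ on which all pairwise column-type counts of $(\bss_1,\bss_2)$ are within $\veps$ of their expectations, and use Lemmas~\ref{lem:twolev-indep}–\ref{lem:twolev-chebyshev} to show $\bP{(\cdot)\notin\cT_n^\veps}\to 0$; (iii) on $\cT_n^\veps$, for each admissible $\bsy$ bound the number of weight-$w$ vectors $\bsS$ (equivalently, centers) achieving a collision, by a type-counting / Stirling estimate, getting the exponent $\phi(\omega,\gamma,\alpha,\beta)$ after optimizing over the free type parameters; (iv) union-bound over the $q^{nR_2}$ in-group users and over $\bsy$ (only subexponentially many relevant types), so the total in-group error is $\doteq q^{-n(h(\omega)-\max(f_1,f_2)-R_2)}$ up to $\veps$; (v) let $\veps\to 0$. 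I expect step~(iii) — correctly setting up the type optimization so that the maximizing configuration matches the stated domain of $\phi$ and the two sub-cases $f_1,f_2$ emerge cleanly — to be the main obstacle; everything else is a routine adaptation of Theorems~\ref{thm:twolev-21} and \ref{thm:twolev-t1}.
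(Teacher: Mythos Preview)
Your proposal is correct and follows essentially the same route as the paper: the $E_1/E_2/E_3$ split by the group of the offending user, the translation $\bsy\mapsto\bsy'=\bsy-\bsr_1$ that reduces the in-group collision to $\bP{\dist(\bsS,\bsy')\le n\delta}$ for a uniform $\bsS\in\cS_{w,n}$, a typical-set/Chebyshev step via Lemmas~\ref{lem:twolev-indep}--\ref{lem:twolev-chebyshev}, and the identification of $f_1,f_2$ with the same-group versus different-group cases. One small correction for the write-up: in the paper's parametrization $\gamma n=\wt{\bsy'}$ is the weight of the \emph{shifted forgery}, not of the colliding vector $\bsS$ (whose weight is fixed at $\omega n$), and the typical set is defined through the column statistics of $(\bss_1,\bsr_1+\bsx_2)$ rather than of $(\bss_1,\bss_2)$; keeping this straight is what makes the feasibility constraints on $\alpha,\beta,\gamma$ and the distance bound $\gamma-\beta+\alpha\le\delta_i(\omega)$ come out exactly as stated.
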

\begin{proof}
Size-$t$ coalitions are handled in the same way as in Theorem \ref{thm:twolev-t1}.

{\em Size-$2$ coalitions:} 
There are two possibilities depending on whether the pirates
belong to the same group or not. We sketch the case where they are in different groups below.
The other case is analyzed similarly.

Consider a coalition $U = \{ \bsu^1, \bsu^2 \}$, where the users are in different groups,
and let $V \in \cV_2$ be an admissible strategy. We have
\begin{align}
& e_2(\cC^{\omega}_{n},\cD^{\omega}_{2n},U,V) \notag\\
& = \sum_{\bsr_1,\bsr_2,\bss_1,\bss_2} \bP{\bsr_1,\bsr_2,\bss_1,\bss_2} 
\sum_{\bsy} V(\bsy|\bsx_1,\bsx_2) \notag \\
& \quad \qquad \times 
\bP{\cD^{\omega}_{2n}(\bsy) \notin U | \bsr_1,\bsr_2,\bss_1,\bss_2}.
\end{align}
Now,
$$ \left [\cD^{\omega}_{2n}(\bsy) \notin U \right] = E_1 \cup E_2 \cup E_3, $$
where, the events $E_1,E_2,E_3$ are formed of those $\bsu^{\prime} \notin U$
that satisfy $\dist(\bsX_{\bsu^{\prime}},\bsy) \le 
\dist(\{\bsx_1,\bsx_2\},\bsy)$ and the conditions $u_1^{\prime} = u_1^1$, 
$u_1^{\prime} = u_1^2$, $u_1^{\prime} \notin \cG(U),$ respectively.
The error event $E_3$ was already analyzed in Theorem \ref{thm:twolev-21} 
and its conditional probability approaches 0 if (\ref{eqn:twolev-21-1}) holds. 
We consider $E_1$ below. The analysis for $E_2$ is identical by symmetry.
\begin{align}
& \bP{ E_1 \Big | 
\bsr_1,\bsr_2,\bss_1,\bss_2} \notag\\
& = \bP{ \exists u_2^{\prime} \neq u_2^1: 
\dist(\bsr_1+\bsS_{u_1^1 u_2^{\prime}},\bsy) \le \dist(\{\bsx_1,\bsx_2\},\bsy)} \notag \\
& \le q^{nR_2} \bP{\dist(\bsr_1+\bsS_{u_1^1 u_2^{\prime}},\bsy) \le \dist(\{\bsx_1,\bsx_2\},\bsy)} \notag\\
& = q^{nR_2} \bP{\dist(\bsS_{u_1^1 u_2^{\prime}},\bsy^{\prime}) \le 
\dist(\{\bss_1,\bsr_1+\bsx_2\},\bsy^{\prime}) }, \label{eqn:twolev-t2-3}
\end{align}
where $\bsy^{\prime}=\bsy + \bsr_1 \in \cE(\bss_1,\bsr_1+\bsx_2)$.
In this case, we use Lemmas \ref{lem:twolev-indep} and \ref{lem:twolev-chebyshev} to
show that 
$$\cT^{\veps}_n:= \left \{ (\bsr_1,\bsr_2,\bss_1,\bss_2): \begin{array}{l}
\s_{(0,0)}(\bss_1,\bsr_1+\bsx_2) \simeq n\frac{1-\omega}{q} \\
\s_{(a,a^{\prime})}(\bss_1,\bsr_1+\bsx_2) \simeq n \frac{\omega}{(q-1)q} \\
\forall a,a^{\prime} \in \cQ \backslash \{0\}
\end{array} \right \}.$$
is the typical set. For simplicity, we have omitted $\veps$ and will use the 
approximate relations 
$\simeq$, $\lesssim$, $\gtrsim$ in its place.
Now, take any $(\bsr_1,\bsr_2,\bss_1,\bss_2) \in \cT^{\veps}_n$ 
and $\bsy^{\prime} \in \cE(\bss_1,\bsr_1+\bsx_2)$. The number of undetectable 
positions in $\{\bss_1,\bsr_1+\bsx_2\}$ is $\simeq n/q$, while the number of
coordinates where both symbols are non-zero is $\simeq n \omega (q-1)/q$. This implies 
$\dist(\{\bss_1,\bsr_1+\bsx_2\},\bsy^{\prime}) \lesssim n \delta_2(\omega)$
and $ n \omega (q-1)/q \lesssim \wt{\bsy^{\prime}} \lesssim n(1- (1-\omega)/q)$.

Let $\wt{\bsy^{\prime}}=\gamma n$, where $\gamma \in [0,1]$.
Then
$$
\bP{\dist(\bsS_{u_1^1 u_2^{\prime}},\bsy^{\prime}) \le n \delta_2(\omega) } 
\doteq q^{-n E(\omega,\gamma)},
$$
where 
$$E(\omega,\gamma) = h(\omega) - \max_{\substack{
\alpha,\beta: \\
\gamma - \beta + \alpha \le \delta_2(\omega)
}} \phi(\omega,\gamma,\alpha,\beta). $$
Since $\gamma$ can be chosen by the pirates such that 
$\omega \frac{q-1}q \lesssim \gamma \lesssim 1- \frac{1-\omega}{q}$,
by substituting the above in (\ref{eqn:twolev-t2-3}), we conclude that the
conditional probability of $E_1$ (and $E_2$) approaches 0 if 
$R_2 < h(\omega) - f_2(\omega)$. Similarly, we obtain
$R_2 < h(\omega) - f_1(\omega)$ when the pirates are in the same group. 
\end{proof}

Let us show that the rate region thus defined is nontrivial. 
Given $\omega$ and $\gamma,$ the maximizing values of the other arguments of $\phi$ are $\alpha= \omega(1-\gamma)$ and
$\beta= \omega \gamma/(q-1)$, so
\begin{align*}
\phi(\omega,\gamma,\alpha,\beta) \le h(\omega) - \gamma \omega 
\Big(\log_q  \frac{q-1}{q-2}  +  \frac{\log_q(q-2)}{q-1} \Big).
\end{align*}
Consequently, we get 
$\max(f_1(\omega),f_2(\omega)) \le h(\omega) - 
D$, where 
$D=D(\omega)=\omega^3 \big(\log_q  \frac{q-1}{q-2}  + \frac{\log_q(q-2)}{q-1} \big)$ 
and $D(\omega)>0$ for all $\omega>0.$
This shows that the r.-h.s. of (\ref{eqn:twolev-t2-2}) is positive. 
By Remark \ref{rem:twolev-t1}, the r.-h.s. of (\ref{eqn:twolev-t2-1}) is also positive if 
$q \ge t+1$ and $\frac{t-1}{t} \Big(1 - \frac{1}{q^{t-1}} \Big) + \omega < \frac{q-1}{q}$.
This calculation can be further refined because of the additional
constraints on the parameters $\alpha,\beta,\gamma$ mentioned above.

\remove{
\section{Concluding Remarks}

We introduced a new class of problems involving two-level codes,
where the licensed users are 
organized in several groups. The main advantage of two-level codes
is their ability to partially tolerate coalitions larger than the designed
threshold. For instance, in the case of two-level fingerprinting 
codes, if the coalition size is beyond the designed limit,
then up to a certain larger threshold, the tracing algorithm can 
identify a group containing a pirate. We presented constructions
of codes with the two-level fingerprinting property. Our main focus was 
on the narrow-sense rule (\ref{eqn:intro-wenv}) and minimum distance
based decoding.
The concept of two-level codes raises several new questions. 
A few are identified below.

\begin{openprob}
Construct two-level fingerprinting codes under the wide-sense rule
(\ref{eqn:intro-wenv}) for generating forgeries. In particular, 
constructions for binary codes and other tracing algorithms
besides MD decoding are especially of interest.
\end{openprob}

\begin{openprob}
Find upper bounds on the rates of two-level fingerprinting
(also TA, frameproof) codes.
\end{openprob}
}

\emph{Acknowledgment:}
This work was partially supported by NSF through grants CCF0635271, CCF0830699,
and DMS0807411.


\begin{thebibliography}{1}

\bibitem{a-tardos} E.~Amiri and G.~Tardos, 
``High rate fingerprinting codes and the fingerprinting capacity,'' 
\emph{Proc. 20th Annual ACM-SIAM Symposium on Discrete Algorithms} (SODA 2009),
pp. 336--345, Jan. 2009.

\bibitem{abd} N.~P.~Anthapadmanabhan, A.~Barg and I. Dumer, 
``On the fingerprinting capacity under the marking assumption,'' 
\emph{IEEE Trans. on Inform. Theory}, 
Vol. 54, No. 6, pp. 2678--2689, Jun. 2008.


\bibitem{bassalygo} L.~A.~Bassalygo, V.~A.~Zinov'ev, V.~V.~Zyablov,
M.~S.~Pinsker and G.~Sh.~Poltyrev, ``Bounds for codes with unequal error
protection of two sets of messages,'' \emph{Probl. of Inform. Trans.}, 
Vol. 15, No. 3, pp. 190--197, Jul.--Sep. 1979. 

\bibitem{blackburn} S.~R.~Blackburn, ``Combinatorial schemes for protecting digital 
content,'' 
\emph{Surveys in combinatorics}, 2003 (Bangor), London Math. Soc. Lecture Note 
Ser., vol. 307,
pp. 43--78, Cambridge Univ. Press, Cambridge, 2003.

\bibitem {boneh}D.~Boneh and J.~Shaw, ``Collusion-secure fingerprinting for
digital data,'' \emph{IEEE Trans. Inform. Theory}, Vol. 44, No. 5, pp.
1897--1905, Sep. 1998.


\bibitem{chor-fiat-naor-pin} B.~Chor, A.~Fiat, M.~Naor and B.~Pinkas,
``Tracing traitors,'' \emph{IEEE Trans. Inform.  Theory}, Vol. 46,
No. 3, pp. 893--910, May 2000.

\bibitem{huang-moulin} Y.-W.~Huang and P.~Moulin, ``Saddle-point solution of the 
fingerprinting capacity game under the marking assumption,''  
{\em Proc. IEEE Internat. Sympos. Information Theory} (ISIT 2009), 
Jun. 2009.

\bibitem{elgamal} S-C.~Lin, M.~Shahmohammadi and H.~El~Gamal, 
``Fingerprinting with minimum distance decoding,'' 
preprint, arxiv:0710.2705, Oct. 2007.





\end{thebibliography}
\end{document}